\documentclass[conference,10pt]{IEEEtran}
\usepackage[utf8]{inputenc}
\usepackage{amsmath,amssymb,amsthm,amsfonts}
\usepackage{accents}
\usepackage{mathtools,thmtools}
\usepackage{tikz,pgfplots}
\usepackage{bm, bbm}
\usepackage{color,xcolor,colortbl}
\usepackage{watermark}
\usepackage{epsfig}
\usepackage{subfigure}
\usepackage{algorithm,algpseudocode}
\algnewcommand\algorithmicforeach{\textbf{for each}}
\algdef{S}[FOR]{ForEach}[1]{\algorithmicforeach\ #1\ \algorithmicdo}

\pgfplotsset{compat = 1.7}
\usetikzlibrary{positioning,shapes,arrows}
\usetikzlibrary{decorations.markings}

\renewcommand{\mathbf}[1]{{\bm{#1}}}

\newcommand{\eg}{\emph{e.g.}}
\newcommand{\etal}{\emph{et al.}}
\newcommand{\ie}{\emph{i.e.}}

\newcommand{\etc}{\emph{etc.}}

\renewcommand{\geq}{\geqslant}

\newcommand{\set}[1]{\mathcal{#1}}

\newcommand{\defeq}{\triangleq}

\newcommand{\operator}[1]{\tilde{#1}}

\newcommand{\setX}{\set{X}}

\newcommand{\matr}[1]{\mathbf{#1}}
\newcommand{\vect}[1]{\mathbf{#1}}

\newcommand{\vx}{\vect{x}}

\newcommand{\vy}{\vect{y}}

\newcommand{\vxf}{\vx_{\partial{f}}}
\newcommand{\vxpf}{\vx'_{\partial{f}}}
\newcommand{\vyf}{\vy_{\delta{f}}}

\newcommand{\Herm}{\mathsf{H}}

\newcommand{\ZBethe}{Z_{\mathrm{Bethe}}}
\newcommand{\FBethe}{F_{\mathrm{Bethe}}}

\newcommand{\perm}{\operatorname{perm}}

\pgfdeclarelayer{bg}    
\pgfdeclarelayer{fg}    
\pgfsetlayers{bg,main,fg}  

\tikzset{-*/.style={shorten >=#1,decoration={markings,
            mark={at position 1 with {\draw[fill] circle [radius=#1];}}},
          postaction=decorate},-*/.default=1.5pt}
\tikzset{*-/.style={shorten <=#1,decoration={markings,
            mark={at position 0 with {\draw[fill] circle [radius=#1];}}},
          postaction=decorate},*-/.default=1.5pt}
\tikzset{*-*/.style={shorten <=#1,shorten >=#1,decoration={markings,
            mark={at position 0 with {\draw[fill] circle [radius=#1];}},
            mark={at position 1 with {\draw[fill] circle [radius=#1];}},},
          postaction=decorate},*-*/.default=1.5pt}

\theoremstyle{plain}

\theoremstyle{definition}

\theoremstyle{remark}

\newcounter{mytempeqcounter}

\declaretheorem[style=plain]{theorem}

\declaretheorem[style=plain,sibling=theorem]{proposition}
\declaretheorem[style=definition,sibling=theorem,
                                 qed=$\blacksquare$]{definition}
\declaretheorem[style=definition,sibling=theorem,
                                 qed=$\blacktriangle$]{example}
\declaretheorem[style=definition,sibling=theorem, 
                                 qed=$\blacksquare$]{assumption}
\declaretheorem[style=definition,sibling=theorem,
                                 qed=$\blacktriangle$]{remark}

\begin{document}

\title{Double-Edge Factor Graphs: \\
       Definition, Properties, and Examples}

\author{
    \IEEEauthorblockN{Michael~X.~Cao and Pascal~O.~Vontobel}
    \IEEEauthorblockA{Department of Information Engineering \\
                      The Chinese University of Hong Kong \\
                      \{m.x.cao, pascal.vontobel\}@ieee.org\\[-0.30cm]}
}

\maketitle

\begin{abstract}
  Some of the most interesting quantities associated with a factor graph are
  its marginals and its partition sum. For factor graphs \emph{without cycles}
  and moderate message update complexities, the sum-product algorithm
  (SPA) can be used to efficiently compute these quantities exactly. Moreover,
  for various classes of factor graphs \emph{with cycles}, the SPA has been
  successfully applied to efficiently compute good approximations to these
  quantities. Note that in the case of factor graphs with cycles, the local
  functions are usually non-negative real-valued functions.

  In this paper we introduce a class of factor graphs, called double-edge
  factor graphs (DE-FGs), which allow local functions to be complex-valued and
  only require them, in some suitable sense, to be positive semi-definite. We
  discuss various properties of the SPA when running it on DE-FGs and we show
  promising numerical results for various example DE-FGs, some of which have
  connections to quantum information processing.
\end{abstract}

\section{Introduction}
\label{sec:introduction:1}

On the one hand, many classical algorithms like Kalman filtering, the BCJR
algorithm, the forward-backward algorithm, \etc, can be seen as special cases
of the sum-product algorithm (SPA) applied to suitable \emph{cycle-free}
factor graphs~\cite{Kschischang:Frey:Loeliger:01, Loeliger:04:1}. On the other
hand, the SPA has also been successfully applied to various classes of factor
graphs \emph{with cycles}, as is for example witnessed by the SPA-based
decoding techniques of low-density parity-check (LDPC) codes, which appear
nowadays in various telecommunication
standards~\cite{Kschischang:Frey:Loeliger:01, Loeliger:04:1}.

For the case of SPA on factor graphs with cycles, there are a few results that
hold for large classes of factor graphs (like the result by Yedidia
\etal~\cite{Yedidia:Freeman:Weiss:05:1}, which states that fixed points of the
SPA correspond to stationary points of the Bethe free energy function) or the
graph-cover-based interpretation of the Bethe approximation of the partition
sum~\cite{Vontobel:13:2}, but in general the results are for special classes
of factor graphs like Gaussian graphical models (see, \eg,
\cite{Malioutov:Johnson:Willsky:06:1}) or log-supermodular (``attractive'')
graphical models (see, \eg, \cite{Ruozzi:12:1}). In most of these cases, the
focus has been on factor graphs with non-negative real-valued local functions.
However, there are applications, in particular in the area of quantum
information processing, where one would like to have more general factor
graphs. Let us mention some of the approaches that have been pursued:
\begin{itemize}

\item One approach replaces scalar-valued local functions by matrix-valued
  local functions (see, \eg, \cite{Leifer:Poulin:08:1, Cao:Vontobel:16:1}).

\item Another approach keeps scalar-valued local functions, but imposes
  certain symmetry conditions on the factor
  graph~\cite{Loeliger:Vontobel:12:1, Loeliger:Vontobel:15:1:subm}. (See the
  discussion and the references in~\cite{Loeliger:Vontobel:15:1:subm} on how
  the factor graphs therein are related to tensor networks, \etc) The
  framework in~\cite{Loeliger:Vontobel:12:1, Loeliger:Vontobel:15:1:subm} can,
  for example, be conveniently used for estimating information rates of
  channels with a classical input and output and a quantum
  memory~\cite{Cao:Vontobel:17:1}.

\end{itemize}

Note that all marginal calculations were done exactly
in~\cite{Loeliger:Vontobel:12:1, Loeliger:Vontobel:15:1:subm,
  Cao:Vontobel:17:1}. This can be achieved, for example, by first merging
suitable variables so that the resulting factor graph is cycle free and then
to apply the SPA. (Of course, this only gives practical algorithms as long as
the alphabet sizes of the merged variables are not too large.)

However, similar to the above-mentioned classes of factor graphs with cycles,
it is tempting to also apply the SPA to factor graphs as
in~\cite{Loeliger:Vontobel:12:1, Loeliger:Vontobel:15:1:subm} with
cycles. There are different approaches to accomplish this by suitably
reformulating the factor graphs in~\cite{Loeliger:Vontobel:12:1,
  Loeliger:Vontobel:15:1:subm}, some reformulations having better complexity
properties, some reformulations having better analytical properties. An
interesting option in this design space are the double-edge factor graphs
(DE-FGs) that we introduce in this paper.\footnote{As we will see, the name
  ``double-edge'' comes from the fact that pairs of edges (and with that the
  associated variables) are merged. For example, referring to in
  Fig.~\ref{fig:quantum:partial:measurement}~(top), the edge associated with
  variable $x_0$ and the edge associated with variable $x'_0$ are merged to a
  double-edge in Fig.~\ref{fig:quantum:partial:measurement}~(bottom).}

This paper is structured as follows. We define DE-FGs in
Section~\ref{sec:denfg:1} and then formulate the SPA, along with some of its
properties, in Section~\ref{sec:spa}. We discuss a variety of examples in
Section~\ref{sec:examples:1}, we point out connections to a recent paper by
Mori in Section~\ref{sec:connections:mori:1}, and we conclude the paper in
Section~\ref{sec:conclusion:1}. Note that throughout this paper, all alphabets
are assumed to be finite.

\section{Double-edge Factor Graphs}
\label{sec:denfg:1}

In this section we define double-edge factor graphs (DE-FGs), more precisely,
double-edge normal factor graphs (DE-NFGs). The word ``normal'' refers to the
fact that variables appear as arguments of only one or two local
functions.\footnote{In the same way that any factor graph can be suitably
  reformulated as a normal factor graph~\cite{Forney:01:1}, any DE-FG can be
  suitably reformulated as a DE-NFG. With this, there is no loss in generality
  in considering only DE-NFGs.}

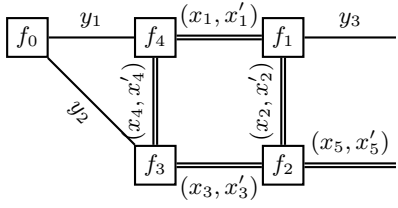
\begin{figure}
  \centering
  \begin{tikzpicture}[nodes = {draw= none, minimum size = 0pt},
	factor/.style={rectangle, minimum size=.5cm, draw, thick},
	node distance=1.7cm,
	every path/.style={every node/.style={font=\small, above=0pt,sloped}}]
	\node[factor] (A) {$f_0$};
	\node[factor, right of = A] (B) {$f_4$};
	\node[factor, right of = B] (C) {$f_1$};
	\node[factor, below of = B] (D) {$f_3$};
	\node[factor, below of = C] (E) {$f_2$};
	
	\path (A) edge[thick] node {$y_1$} (B);
	\path (A) edge[thick] node[below=0pt] {$y_2$} (D);
	\path (B) edge[thick, double] node{$(x_1,x_1')$} (C);
	\path (E) edge[thick, double] node{$(x_2,x_2')$} (C);
	\path (E) edge[thick, double] node[below=0pt]{$(x_3,x_3')$} (D);
	\path (D) edge[thick, double] node{$(x_4,x_4')$} (B);
	\path (C) edge[thick] node{$y_3$} ([xshift=1.5cm]C);
	\path (E) edge[thick,double] node{$(x_5,x_5')$} ([xshift=1.5cm]E);
\end{tikzpicture}
  \caption{A double-edge normal factor graph (DE-NFG).}
  \label{fig:demo}
\end{figure}

\begin{example}
  Consider the DE-NFG in Fig.~\ref{fig:demo}, which is a pictorial
  representation of the factorization
  \begin{align*}
    g(\vx,\vx';\vy)
    &= f_0(y_1,y_2)
         \nonumber \\
    &\hspace{-0.5cm}
       \cdot \,
       f_1(x_1,x_2,x_1',x_2';y_3) 
       \cdot 
       f_2(x_2,x_3,x_5,x_2',x_3',x_5') 
         \nonumber \\
    &\hspace{-0.5cm}
       \cdot \,
       f_3(x_3,x_4,x_3',x_4';y_2)
       \cdot 
       f_4(x_4,x_1,x_4',x_1';y_1).
  \end{align*}
  It is called a DE-NFG because some of the edges are double lines that
  correspond to variables that are paired. (For example, $x_1$ and $x'_1$ are
  paired in Fig.~\ref{fig:demo}.) Such paired variables are assumed to have
  both the same alphabet. Moreover, as detailed below, the local functions
  have to satisfy some constraints.
\end{example}

\begin{definition}
  Consider the factorization
  \begin{align*}
    g(\vx,\vx';\vy) 
      &= \prod_{f\in\set{F}}
           f(\vxf,
             \vxpf;
             \vyf)
  \end{align*}
  represented by some DE-NFG. We will use the following conventions:
  \begin{itemize}
  
  \item We call $g$ the global function.
  
  \item We call $f \in \set{F}$ the local functions. With some abuse of
    notation, we will also use $f$ to refer to the corresponding function node
    in the DE-NFG.
  
  \item For every function node $f \in \set{F}$, the variables associated with
    the incident double-edges are collected in $\vxf, \vxpf$.
  
  \item For every function node $f \in \set{F}$, the variables associated with
    the incident single-edges are collected in $\vyf$.
  
  \end{itemize}
  Most importantly, we require every local function $f \in \set{F}$ to have
  the following property:
  \begin{center}
    the local function $f(\vxf, \vxpf; \vyf)$ is complex-valued \\
    and is positive semi-definite (PSD).
  \end{center}
  The latter property is to be understood as follows: for every $\vyf$ and
  every complex-valued function $h$ over the alphabet of $\vxf$ (and with that
  also over the alphabet of $\vxpf$), it holds that
  \begin{align}
    \sum_{\vxf, \, 
          \vxpf}
      \overline{h(\vxf)}
      \cdot
      f(\vxf, 
        \vxpf;
        \vyf)
      \cdot
      h(\vxpf)
      &\geq 
         0.
           \label{eq:denfg:psd:condition:1}
  \end{align}
  (Here and in the following, over-bar denotes complex conjugation.) Clearly,
  if a function node $f$ has no incident double edges, then the condition
  in~\eqref{eq:denfg:psd:condition:1} reduces to the condition that the local
  function $f$ takes on only non-negative real values.
\end{definition}

For proving various properties of DE-NFG, the following observation is very
beneficial.

\begin{remark}
  \label{remark:psd:decomposition:1}

  For every local function $f \in \set{F}$ and every $\vyf$, there are a
  finite set $\set{K}_{f,\vyf}$ and some complex-valued functions
  $b_{f,\vyf,k}$ \!, $k \in \set{K}_{f,\vyf}$ \!, over the alphabet of
  $\vxf$ such that
  \begin{align*}
    f(\vxf, \vxpf; \vyf) 
      &= \sum_{k \in \set{K}_{f,\vyf}}
           \overline{b_{f,\vyf,k}(\vxf)}
           \cdot
           b_{f,\vyf,k}(\vxpf) \ .
  \end{align*}
  This follows easily from the eigenvalue decomposition of PSD matrices.
\end{remark}

\begin{proposition}
  The partition sum of a DE-NFG, \ie,
  \begin{align*}
    Z 
      &\defeq 
         \sum_{\vx, \, \vx', \, \vy}
           g(\vx,\vx';\vy) \ , 
  \end{align*}
  is a non-negative real number.
\end{proposition}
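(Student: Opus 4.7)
The plan is to express $Z$ as a sum of manifestly non-negative real numbers by applying the PSD decomposition from Remark~\ref{remark:psd:decomposition:1} to every local function in the DE-NFG.

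First, I would fix $\vy$ and invoke Remark~\ref{remark:psd:decomposition:1} to write each local function as
\begin{align*}
    f(\vxf, \vxpf; \vyf)
      &= \sum_{k_f \in \set{K}_{f, \vyf}} \overline{b_{f,\vyf,k_f}(\vxf)} \cdot b_{f,\vyf,k_f}(\vxpf).
\end{align*}
Substituting into $g = \prod_f f$ and distributing the (finite) product of sums into a sum of products yields an expansion indexed by tuples $(k_f)_{f \in \set{F}}$ whose generic summand has the form $\prod_f \overline{b_{f,\vyf,k_f}(\vxf)} \cdot \prod_f b_{f,\vyf,k_f}(\vxpf)$.

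Next, I would exploit the fact that, with $\vy$ and $(k_f)$ frozen, this summand already factors as $A_{\vy,(k_f)}(\vx) \cdot \overline{A_{\vy,(k_f)}(\vx')}$, where
\begin{align*}
A_{\vy,(k_f)}(\vx) \defeq \prod_{f \in \set{F}} \overline{b_{f,\vyf,k_f}(\vxf)}
\end{align*}
depends only on $\vx$ (not $\vx'$). Since $\vx$ and $\vx'$ are summed independently in the definition of $Z$, this gives
\begin{align*}
\sum_{\vx, \vx'} A_{\vy,(k_f)}(\vx) \cdot \overline{A_{\vy,(k_f)}(\vx')} = \Bigl|\sum_{\vx} A_{\vy,(k_f)}(\vx)\Bigr|^2 \geq 0.
\end{align*}
Summing these quantities over $\vy$ and over all index tuples $(k_f)$ then exhibits $Z$ as a sum of non-negative real numbers, establishing the claim.

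I do not anticipate any genuine obstacle: Remark~\ref{remark:psd:decomposition:1} does the heavy lifting, and the only point requiring care is the bookkeeping of complex conjugations so that the $\vx$ and $\vx'$ sums recombine into a squared modulus. It is worth emphasizing that no structural assumption about the graph (cycle structure, normal form, or treewidth) enters the argument; only the PSD property imposed on each local function, together with the fact that $\vx$ and $\vx'$ are summed independently, is used.
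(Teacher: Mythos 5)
Your proof is correct and follows exactly the route the paper indicates: the paper's own proof is a one-line sketch deferring to Remark~\ref{remark:psd:decomposition:1}, and your argument supplies precisely the omitted details (expanding the product of PSD decompositions and recombining the independent $\vx$ and $\vx'$ sums into a squared modulus). No gaps.
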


\begin{proof}
  This can be proven with the help of
  Remark~\ref{remark:psd:decomposition:1}. We omit the details because of
  space limitations.
\end{proof}

\begin{figure}[t]
  \centering
  \begin{tikzpicture}[nodes = {draw= none, minimum size = 0pt},
	factor/.style={rectangle, minimum size=.5cm, draw, thick},
	largeFactor/.style={rectangle, minimum width = 0.6cm, minimum height = 1.2cm, draw, thick},
	node distance=1.5cm]
	\node[factor] (e) {}; \node at (e) {$\rho$}; 
	\node[right =1.5cm of e] (u0) {};
	\node[above=.25cm of u0, largeFactor] (u0a) {}; \node at (u0a) {$U_0$};
	\node[below=.25cm of u0, largeFactor] (u0b) {}; \node at (u0b)  {$U_0^\Herm$};
	\node[right=1.5cm of u0] (m) {};

	\node[above=.4cm of m, factor,minimum size=.6cm] (ma) {};
		\node[font=\small] at (ma) {$M_{y}$};
	\node[below=.4cm of m, factor,minimum size=.6cm] (mb) {};
		\node[font=\small] at (mb) {$M^\Herm_{y}$};
	
	\node[factor] at (m) (e1) {$=$};

	\node[right=1.5cm of m] (u1) {};
	\node[above=.25cm of u1, largeFactor] (u1a) {}; \node at (u1a) {$U_1$};
	\node[below=.25cm of u1, largeFactor] (u1b) {}; \node at (u1b)  {$U_1^\Herm$};
	\node[factor, right = 1.5cm of u1] (e2) {}; \node at (e2) {$=$};
	
	\path[draw,thick,*-] (e) |- node[above=0pt,font=\small,pos=.8]{$x_0$} (u0a); 
	\path[draw,thick,-*] (e) |- node[below=0pt,font=\small,pos=.8]{$x'_0$} (u0b); 
	\path[draw,thick,*-] (u0a.east|-ma.center) -- node[below=0pt,font=\small]{$x_3$} (ma.west);
	\path[draw,thick,-*] (u0b.east|-mb.center) -- node[above=0pt,font=\small]{$x_3'$} (mb.west);
	\path[draw,thick,*-] (ma.east) -- node[below=0pt,font=\small]{$x_4$} (u1a.west|-ma.center);
	\path[draw,thick,-*] (mb.east) -- node[above=0pt,font=\small]{$x_4'$} (u1b.west|-mb.center);
	\path[draw,thick,*-] ([yshift=.25cm]u0a.east) --
		node[above=0pt,font=\small]{$x_2$} ([yshift=.25cm]u1a.west);
	\path[draw,thick,-*] ([yshift=-.25cm]u0b.east) -- 
		node[below=0pt,font=\small]{$x_2'$}([yshift=-.25cm]u1b.west);
	\path[draw,thick,*-] (u1a) -| node[above,pos=.2,font=\small]{$x_5$} (e2);
	\path[draw,thick,-*] (u1b) -| node[below,pos=.2,font=\small]{$x_5'$} (e2);
	\path[draw,thick] (e1) -- (ma);
	\path[draw,thick] (e1) -- (mb);
	\path[draw,thick] (e1.east) -| node[right=0pt, pos=1] {$y$} ([xshift=.5cm,yshift=-1.8cm]e1.center);
\end{tikzpicture}
  \vskip0.5cm
  \begin{tikzpicture}[nodes = {draw= none, minimum size = 0pt},
	factor/.style={rectangle, minimum size=.5cm, draw, thick},
	largeFactor/.style={rectangle, minimum width = 0.6cm, minimum height = 1.2cm, draw, thick},
	node distance = 1.5cm]
	\node[factor] (e0) {};
	\node at (e0) {$\rho$};
	\node[largeFactor] (u0) [right =1.2cm of e0] {}; \node at (u0) {$\operator{U}_0$}; 
	\node[factor, draw=none, right = 1.2cm of u0] (fakem) {};
	\node[largeFactor] (u1) [right = 1.2cm of fakem] {}; \node at (u1) {$\operator{U}_1$}; 
	\node[factor, anchor = south, minimum size=.6cm] at (fakem.center|-u0.south) (m) {};
		\node[font=\small] at (m){$\operator{M}$};
	\node[factor,right = 1.2cm of u1] (e1) {};
		\node at (e1) {$I$};

	\path (e0) edge[draw,thick,double]
		node[above=0pt,font=\small]{$(x_0,x_0')$} (u0);
	\path (u0.east|-m.center) edge[draw,thick,double]
		node[above=0pt,font=\small]{$(x_3,x_3')$} (m);
	\path (m) edge[draw,thick,double]
		node[above=0pt,font=\small]{$(x_4,x_4')$} (m.center-|u1.west);
	\path ([yshift=.35cm]u0.east) edge[draw,thick,double] 
		node[above=0pt,font=\small]{$(x_2,x_2')$} ([yshift=.35cm]u1.west);
	\path (u1) edge[draw,thick,double] node[above=0pt,font=\small]{$(x_5,x_5')$} (e1);
	\path (m) edge[draw,thick] node[pos=1,right=0pt] {$y$} ([yshift=-.8cm]m);
\end{tikzpicture}
  \caption{Top: NFG for Example~\ref{ex:quantum:measurement}. Bottom: DE-NFG
    for Example~\ref{ex:quantum:measurement}.}
  \label{fig:quantum:partial:measurement}
\end{figure}
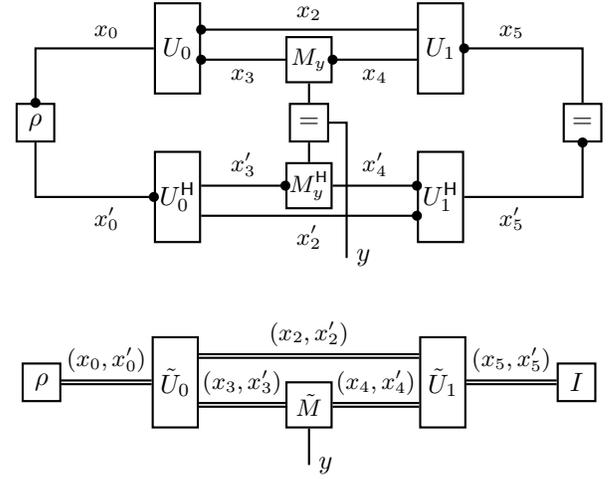

As already mentioned, one of the main motivations of the present paper are the
NFGs in~\cite{Loeliger:Vontobel:12:1, Loeliger:Vontobel:15:1:subm}. So let us
show how a ``typical'' NFG in~\cite{Loeliger:Vontobel:12:1,
  Loeliger:Vontobel:15:1:subm} can be formulated as a DE-NFG.

\begin{example}
  \label{ex:quantum:measurement}

  Consider the NFG in Fig.~\ref{fig:quantum:partial:measurement}~(top), which
  can be used to do probability computations for the following quantum
  mechanical setup:
  \begin{itemize}

  \item At the beginning, some quantum mechanical system is in some mixed
    state (represented by the density matrix $\rho$, which is a PSD matrix).

  \item The system then evolves unitarily (represented by $U_0$).

  \item Afterwards, a sub-system is measured (represented by measurement
    operators $\{ M_{y} \}_{y}$).

  \item Finally the system evolves unitarily (represented by $U_1$).

  \end{itemize}
  (For further details, we refer to~\cite{Loeliger:Vontobel:12:1,
    Loeliger:Vontobel:15:1:subm}.) This NFG can be turned into the DE-NFG
  shown in Fig.~\ref{fig:quantum:partial:measurement}~(bottom) by suitably
  merging edges (and with that the associated variables) and by suitably
  defining the DE-NFG's function nodes. For example, the function node
  $\operator{M}$ is defined to be
  \begin{align}
    \operator{M}(x_3,x_4,x'_3,x'_4; y)
      &\defeq
         M_{y}(x_3,x_4)
         \cdot 
         M_{y}(x'_3,x'_4) \ .
           \label{eq:measurment:function:node:merging:1}
  \end{align}
  Clearly, the function $\operator{M}(x_3,x_4,x'_3,x'_4; y)$ satisfies the
  required PSD constraint. In fact, the expression
  in~\eqref{eq:measurment:function:node:merging:1} is in the form of the
  decomposition in Remark~\ref{remark:psd:decomposition:1}.
\end{example}

One can check that the redrawing procedure in
Example~\ref{ex:quantum:measurement} can be applied to all relevant NFGs
in~\cite{Loeliger:Vontobel:12:1, Loeliger:Vontobel:15:1:subm}.

\clearpage

\section{Sum-Product Algorithm on DE-NFGs \\
               and the Bethe Approximation}
\label{sec:spa}

In this section we define the SPA for DE-NFGs and discuss some of its
properties. In particular, we connect it to generalized versions of the Bethe
free energy function.

Once a DE-NFG as in Fig.~\ref{fig:demo} or in
Fig.~\ref{fig:quantum:partial:measurement}~(bottom) has been defined, we
simply consider it as a particular type of NFG and apply the SPA in the
standard way~\cite{Kschischang:Frey:Loeliger:01, Loeliger:04:1}.  Some
comments:
\begin{itemize}

\item In this paper we only discuss the flooding
  schedule~\cite{Kschischang:Frey:Loeliger:01}, where all messages are updated
  at every iteration. Clearly, other update schedules are possible and might
  be preferable in some cases.

\item If desired, message can be rescaled by a positive scalar at every
  iteration.

\item For reasons of simplicity, we discuss only the case where all edges are
  full edges, \ie, connect two function nodes. (Note that any DE-NFG can be
  turned into such a DE-NFG by attaching suitable dummy function nodes to
  half-edges, thereby turning half-edges into full-edges without changing
  marginals or the partition sum.)

\end{itemize}

Recall that in the case of NFGs, messages are functions over the alphabet of
the variable associated with an edge. Therefore, along a single-edge $e$
between some function nodes $f$ and $h$, we will have messages $\mu^{(t)}_{e
  \to f}(y_e)$ and $\mu^{(t)}_{e \to h}(y_e)$ at iteration $t$. Similarly,
along a double-edge between some function nodes $f$ and $h$, we will have
messages $\mu^{(t)}_{e \to f}(x_e,x'_e)$ and $\mu^{(t)}_{e \to h}(x_e,x'_e)$
at iteration $t$.

\begin{assumption}
  \label{assumption:spa:initialization:1}

  We make the following assumptions about the initial messages, \ie, about the
  messages at time $t = 0$:
  \begin{itemize}

  \item Messages along single-edges are positive real-valued functions.

  \item Messages along double-edges are complex-valued positive definite (PD)
    functions.

  \end{itemize}
  \vskip-0.5cm
\end{assumption}

\begin{proposition}
  \label{prop:spa:behavior:1}

  Let the messages be initialized as in
  Assumption~\ref{assumption:spa:initialization:1}. Then for every iteration
  $t \geq 1$ it holds that:
  \begin{itemize}

  \item Messages along single-edges are non-negative real-valued functions.

  \item Messages along double-edges are complex-valued PSD functions.

  \end{itemize}
\end{proposition}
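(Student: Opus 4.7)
My plan is to proceed by induction on the iteration index $t$. The base case $t = 1$ is handled by the same computation as the inductive step, since a positive-definite function is in particular PSD and a positive real number is in particular non-negative; hence it suffices to show that if at iteration $t$ every double-edge message is PSD and every single-edge message is non-negative real, then the same holds at iteration $t+1$.

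For the inductive step I would fix a function node $f$ and an outgoing edge $e$ incident to $f$, and apply Remark~\ref{remark:psd:decomposition:1} to expand the local function as
\begin{align*}
  f(\vxf,\vxpf;\vyf)
    &= \sum_{k \in \set{K}_{f,\vyf}}
         \overline{b_{f,\vyf,k}(\vxf)}
         \cdot
         b_{f,\vyf,k}(\vxpf),
\end{align*}
and similarly, by the induction hypothesis, each incoming double-edge message as $\mu^{(t)}_{e'\to f}(x_{e'},x'_{e'}) = \sum_{j_{e'}} \overline{c_{e',j_{e'}}(x_{e'})} \cdot c_{e',j_{e'}}(x'_{e'})$. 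Each incoming single-edge message $\mu^{(t)}_{e'\to f}(y_{e'})$ is already non-negative real and so can be written as $|s_{e'}(y_{e'})|^2$. After substituting these expansions into the SPA product-and-marginalize update for the outgoing message on $e$ and interchanging the finite sums, every non-outgoing unprimed variable $x_{e'}$ appears only in barred factors and every non-outgoing primed variable $x'_{e'}$ appears only in unbarred factors, so the marginalizations over $\vxf \setminus x_e$ and over $\vxpf \setminus x'_e$ split cleanly into a complex-conjugate pair.

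If $e$ is a double-edge, this rearrangement exhibits the outgoing message in the form
\begin{align*}
  \mu^{(t+1)}_{e \to h}(x_e, x'_e)
    &= \sum_{k,\,\{j_{e'}\},\,\vyf}
         W
         \cdot
         \overline{D(x_e)}
         \cdot
         D(x'_e),
\end{align*}
where $W = W_{k,\{j_{e'}\},\vyf} \geq 0$ collects the $|s_{e'}(y_{e'})|^2$ factors and $D = D_{k,\{j_{e'}\},\vyf}$ is built explicitly from the $b$'s and $c$'s. Absorbing $\sqrt{W}$ into $D$, this is exactly the decomposition of Remark~\ref{remark:psd:decomposition:1}, hence satisfies~\eqref{eq:denfg:psd:condition:1}. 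If instead $e$ is a single-edge, $y_e$ is not summed out and the same rearrangement produces a non-negatively weighted sum of $|D(y_e)|^2$ terms, which is a non-negative real-valued function of $y_e$.

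The main obstacle I expect is purely the variable bookkeeping inside the expanded product: one must verify that after substitution each summed-out double-edge variable appears in exactly one ``half'' of the conjugate pair, so that the factorization into $\overline{D}\cdot D$ genuinely holds, and that the single-edge incoming messages and the $\vyf$-sums contribute symmetrically to both halves without introducing phases that would break the conjugate structure. Beyond this combinatorial check, the proposition reduces to the observation that the class of finite sums of rank-one PSD kernels (on double-edges), together with non-negative reals (on single-edges), is closed under the sum-and-product operations used by the SPA; no deeper analytic input beyond Remark~\ref{remark:psd:decomposition:1} should be needed.
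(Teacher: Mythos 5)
Your proposal is correct and follows the first of the two routes the paper itself indicates (the paper's proof is only a two-sentence sketch citing either Remark~\ref{remark:psd:decomposition:1} or Schur's product theorem); you have simply carried out the Remark-based rank-one-decomposition argument in full, and the bookkeeping you flag does work out since conjugation commutes with the marginalization sums. No gaps.
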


\begin{proof}
  One approach to prove these statements is based on
  Remark~\ref{remark:psd:decomposition:1}. Another approach is based on
  Schur's product theorem, which states that the component-wise product of two
  PSD matrices is a PSD matrix.\footnote{Actually, Schur's product theorem
    makes the stronger statement that the component-wise product of two PD
    matrices is a PD matrix.}
\end{proof}

\begin{definition}
  \label{def:pseudo:dual:Bethe:parition:sum:1}

  Consider a collection of SPA messages, one for every edge in both
  directions. Let 
  \begin{align*}
    \ZBethe 
      &\defeq 
         \prod_{f \in \set{F}} Z_f 
         \ \Big/ \ 
         \prod_{e \in \set{E}} Z_e 
         \Big. \ ,
  \end{align*}
  where $\set{E}$ is the set of all edges, where for every $f \in \set{F}$ we
  define $Z_f \defeq \sum_{\vxf, \vxpf, \vyf} f(\vxf, \vxpf; \vyf) \cdot
  \bigl( \prod_{e \in \partial{f}} \mu_{e \to f}(x_e,x'_e) \bigr) \cdot \bigl(
  \prod_{e \in \delta{f}} \mu_{e \to f}(y_e) \bigr)$, where for every
  single-edge $e \in \set{E}$ between function nodes $f$ and $h$ we define
  $Z_e \defeq \sum_{y_e} \mu_{e \to f}(y_e) \cdot \mu_{e \to h}(y_e)$, and
  where for every double-edge $e \in \set{E}$ between function nodes $f$ and
  $h$ we define $Z_e \defeq \sum_{x_e, \, x'_e} \mu_{e \to f}(x_e,x'_e) \cdot
  \mu_{e \to h}(x_e,x'_e)$.
\end{definition}

\begin{proposition}
  The function $\ZBethe$ in
  Definition~\ref{def:pseudo:dual:Bethe:parition:sum:1} has the following
  properties:
  \begin{itemize}
    
  \item Assume that the messages have the properties in
    Proposition~\ref{prop:spa:behavior:1} and assume that $\ZBethe$ is
    well-defined, \ie, $Z_e \neq 0$ for all $e \in \set{E}$. Then $\ZBethe$ is
    a non-negative real number.

  \item Fixed points of the SPA correspond to stationary points of the
    function $\ZBethe$. (This generalizes a theorem by Yedidia
    \etal~\cite{Yedidia:Freeman:Weiss:05:1}.)

  \end{itemize}
\end{proposition}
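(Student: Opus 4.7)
The plan is to treat the two bullets separately. The first (non-negativity of $\ZBethe$) follows from the PSD structure of the messages and local functions by using the decomposition in Remark~\ref{remark:psd:decomposition:1}. The second (correspondence with SPA fixed points) is established by differentiating $\log \ZBethe$ with respect to the messages and matching the resulting equations to the standard SPA updates, thereby generalizing the argument of~\cite{Yedidia:Freeman:Weiss:05:1}.

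For the first part, I would invoke Remark~\ref{remark:psd:decomposition:1} to decompose every local function $f$ (for fixed $\vyf$) and every double-edge message $\mu_{e \to f}(x_e, x'_e)$ (which is PSD by Proposition~\ref{prop:spa:behavior:1}) into a sum of terms of the form $\overline{b(\cdot)}\, b(\cdot)$. Substituting these decompositions into $Z_f$ and regrouping, the sum over $(\vxf, \vxpf)$ splits as a sum over $\vxf$ times the complex conjugate of the analogous sum over $\vxpf$, hence becomes a squared modulus. The remaining outer sums (over the spectral indices, over $\vyf$, and over the non-negative single-edge messages) preserve non-negativity, so $Z_f \geq 0$. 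For a double-edge $e = (f,h)$, substituting the spectral decompositions of $\mu_{e \to f}$ and $\mu_{e \to h}$ into $Z_e$ yields an expression that is a sum of terms of the form $|\sum_{x_e} b_k(x_e)\, c_\ell(x_e)|^2$, hence $Z_e \geq 0$; for a single-edge $e$, $Z_e$ is the inner product of two non-negative real functions and is again non-negative. Since $Z_e \neq 0$ by assumption, each $Z_e$ is in fact strictly positive, so $\ZBethe$ is a ratio of non-negative reals with positive denominator, and therefore a non-negative real number.

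For the second part, I would imitate the Lagrangian-free derivative argument of~\cite{Yedidia:Freeman:Weiss:05:1}. Because double-edge messages are Hermitian, I would use Wirtinger calculus, treating $\mu_{e \to f}(x_e, x'_e)$ and its conjugate $\overline{\mu_{e \to f}(x_e, x'_e)} = \mu_{e \to f}(x'_e, x_e)$ as independent (equivalently, parameterizing each Hermitian message by its real diagonal and the real and imaginary parts of its upper triangle). The partial derivative of $\log Z_f$ with respect to $\mu_{e \to f}(x_e, x'_e)$ equals $1/Z_f$ times the sum-product at $f$ over all other incident variables and incoming messages, evaluated at $(x_e, x'_e)$; this is, by definition, the standard SPA outgoing message from $f$ along $e$. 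The partial derivative of $-\log Z_e$ is $-\mu_{e \to h}(x_e, x'_e)/Z_e$, where $h$ is the other endpoint of $e$. Equating the two shows that at stationarity $\mu_{e \to h}$ is proportional to the SPA outgoing message from $f$ along $e$, which is precisely the SPA fixed-point equation up to a positive scalar that is absorbed by the well-known scale freedom of SPA messages. Single edges are handled identically, with ordinary real differentiation. The converse direction is immediate: any SPA fixed point, after a suitable choice of scale, satisfies these stationarity equations.

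The principal technical obstacle is the correct handling of the Hermitian constraint in the differentiation step of the second part: variations of $\mu_{e \to f}$ must respect $\mu_{e \to f}(x'_e, x_e) = \overline{\mu_{e \to f}(x_e, x'_e)}$, otherwise spurious stationarity conditions appear. Wirtinger calculus (or the explicit real parameterization above) resolves this cleanly and reduces the computation to the same combinatorial structure as in~\cite{Yedidia:Freeman:Weiss:05:1}. The first part, by contrast, is essentially bookkeeping once the spectral decompositions of Remark~\ref{remark:psd:decomposition:1} have been substituted.
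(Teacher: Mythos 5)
The paper omits its own proof of this proposition (citing space limitations), so there is nothing to compare against line by line; your argument is correct and uses exactly the ingredients the paper signals as the intended ones, namely the spectral decomposition of Remark~\ref{remark:psd:decomposition:1} to show each $Z_f$ and $Z_e$ is a non-negative real (a sum of squared moduli weighted by non-negative single-edge messages), and a Yedidia-style stationarity computation of $\partial \log \ZBethe / \partial \mu_{e \to f}(x_e, x'_e)$ for the fixed-point correspondence. The only caveats worth recording are that the log-derivative step implicitly requires $Z_f \neq 0$ in addition to the stated $Z_e \neq 0$, and that your Wirtinger/Hermitian-parameterization care is ultimately harmless because the unconstrained complex gradient is itself Hermitian when evaluated at Hermitian PSD messages, so restricting variations to the Hermitian subspace yields the same stationarity conditions (and, conversely, at any SPA fixed point one finds $Z_e = \gamma_{f,e} Z_f$ automatically, so the proportionality constants match without further rescaling).
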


\begin{proof}
  Omitted due to space limitations. 
\end{proof}

Evaluating $\ZBethe$ in Definition~\ref{def:pseudo:dual:Bethe:parition:sum:1}
at a fixed point of the SPA results in the Bethe approximation of the 
partition sum of the DE-NFG.

One can also generalize the Bethe free energy function $\FBethe$
from~\cite{Yedidia:Freeman:Weiss:05:1}, where $\FBethe$ is a function over a
suitable generalization of the local marginal polytope. While a statement
(analogous to a statement in~\cite{Yedidia:Freeman:Weiss:05:1}) that fixed
points of the SPA correspond to stationary points of $\FBethe$ can be made,
evaluating $\ZBethe$ based on $\FBethe$ is trickier because of the
multi-valuedness of the complex logarithm.

\section{Examples}
\label{sec:examples:1}

In this section we discuss various examples of DE-NFGs. In particular, we
compare the obtained Bethe approximation of the partition sum with the true
partition sum. (The NFGs in this section have modest sizes so that the true
partition function can be computed efficiently.) Moreover, for the first
example, we can also make some analytical statements.

\begin{figure}
  \begin{center}
    \subfigure[]{
      \epsfig{file=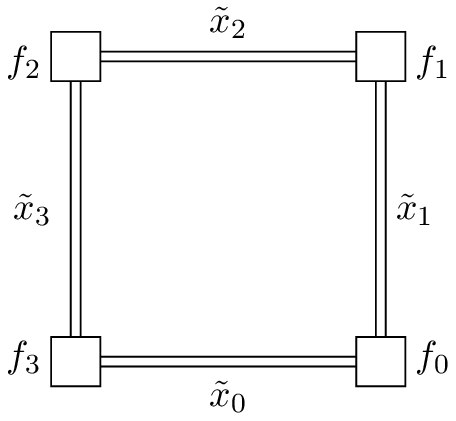,
              width=0.35\linewidth}
    }
    \hspace{0.10cm}
    \subfigure[]{
      \epsfig{file=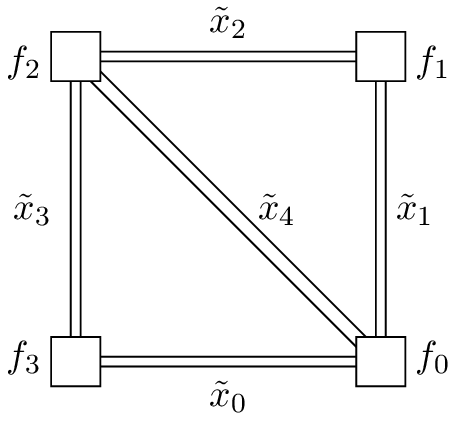,
              width=0.35\linewidth}
    }

    \subfigure[]{
      \parbox{0.45\linewidth}{%
      \includegraphics[width=\linewidth]
                      {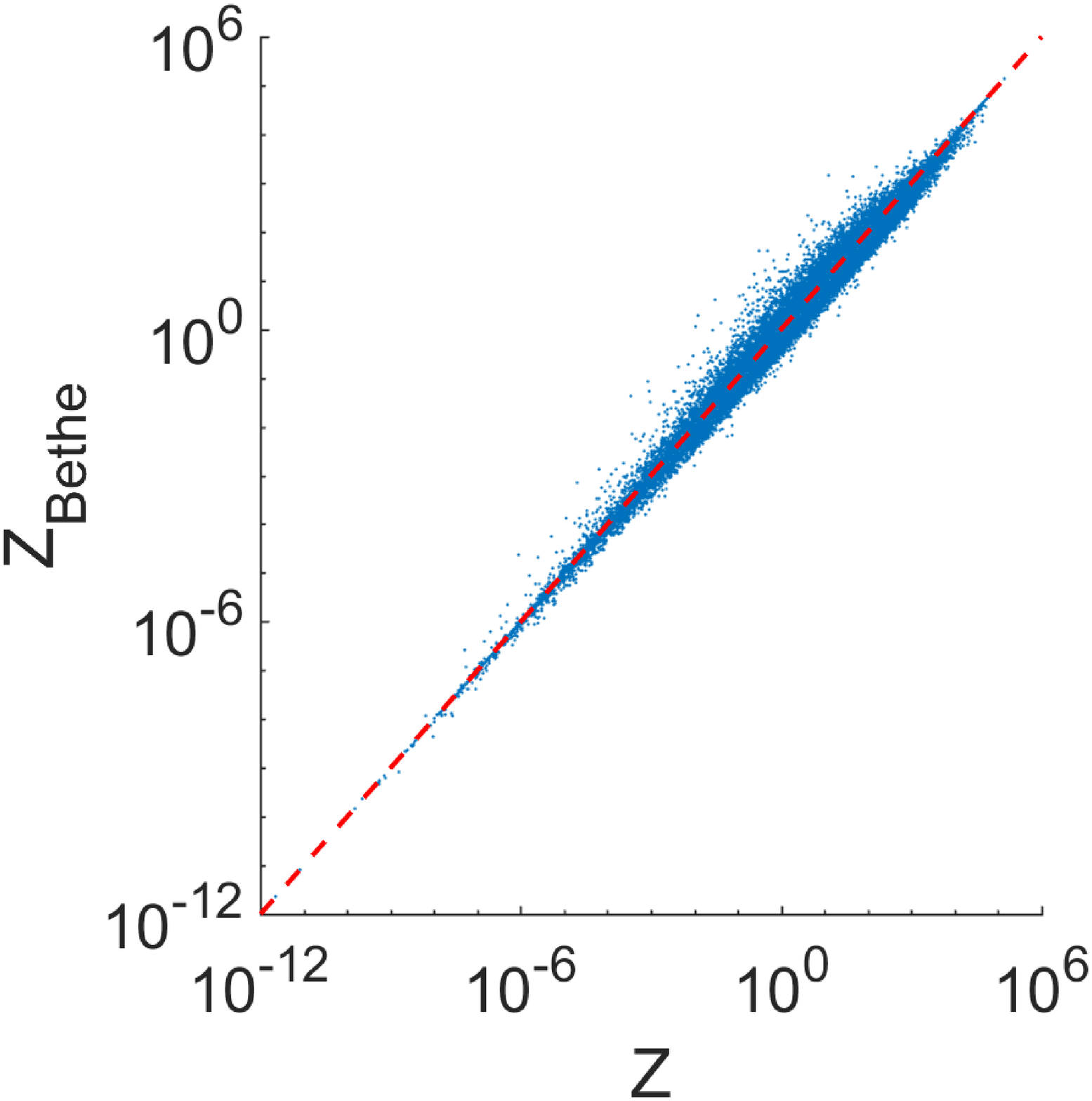}
      \vskip0.25cm
      }
    }
    \hspace{0.10cm}
    \subfigure[]{
      \parbox{0.45\linewidth}{%
      \includegraphics[width=\linewidth]
                      {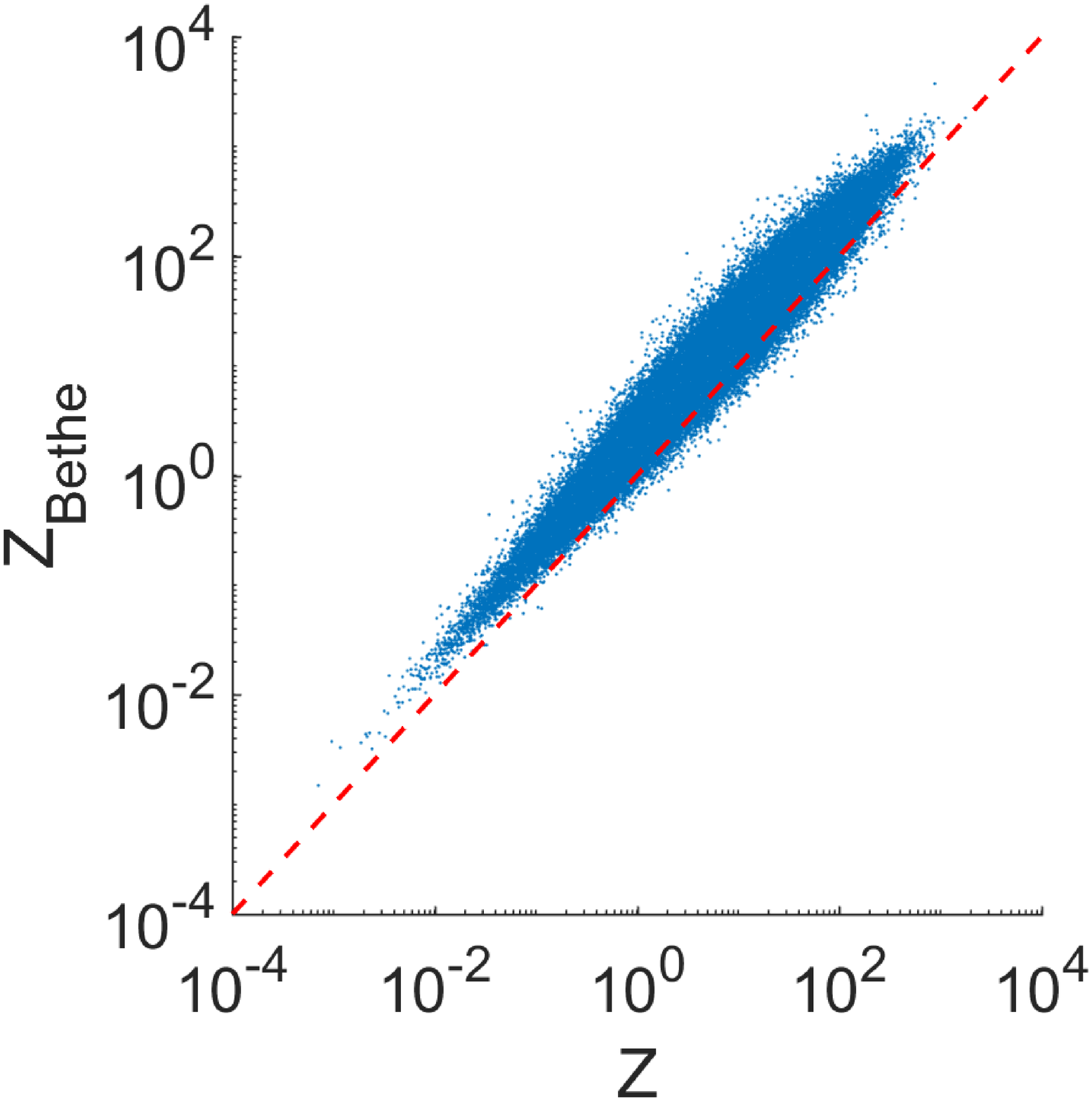}
      \vskip0.25cm
      }
    }
  \end{center}
  \caption{DE-NFGs used in Examples~\ref{example:denfg:four:cycle:1}
    and~\ref{example:denfg:four:cycle:with:diagonal:1}.}
  \label{fig:denfg:length:4:cycle:1}
\end{figure}

\begin{example}
  \label{example:denfg:four:cycle:1}

  Let $n$ be some integer larger than one. Consider a DE-NFG whose topology is
  an $n$-cycle and where all variables take on values in the same finite
  alphabet $\set{X}$. (Fig.~\ref{fig:denfg:length:4:cycle:1} shows such a
  DE-NFG for $n = 4$.) Let $\matr{F}$ be a complex-valued PD matrix of size
  $|\set{X}|^2 \times |\set{X}|^2$ with entries $F(x_0,x_1;x'_0,x'_1)$. For $i
  \in [n] \defeq \{ 0, 1, \ldots, n \! - \! 1 \}$, we define the local
  function $f_i$ to be $f_i(x_i,x_{i+1}; x'_i,x'_{i+1}) \defeq
  F(x_i,x_{i+1};x'_i,x'_{i+1})$. (All indices are modulo $n$.)

  In order to proceed, it is convenient to define the complex-value matrix
  $\matr{B}$ of size $|\set{X}|^2 \times |\set{X}|^2$ with entries
  $B(x_0,x'_0;x_1,x'_1) \defeq F(x_0,x_1;x'_0;x'_1)$ and to define $\tilde x_i
  \defeq (x_i, x'_i)$, $i \in [n]$.

  Let $\mu_{f_i \to f_{i+1}}^{(t)}(\tilde x_{i+1})$ be the SPA message along
  the double edge from $f_i$ to $f_{i+1}$ at time index $t$. Similarly, let
  $\mu_{f_{i+1} \to f_i}^{(t)}(\tilde x_{i+1})$ be the SPA message along the
  double edge from $f_{i+1}$ to $f_i$ at time index $t$. Clearly,
  \begin{align}
    \mu_{f_i \to f_{i+1}}^{(t)}(\tilde x_{i+1})
      &\propto
         \sum_{\tilde x_{i-1}}
           \mu_{f_{i-1} \to f_i}^{(t-1)}(\tilde x_{i-1})
           \cdot
           B(\tilde x_{i-1}, \tilde x_i) \ ,
             \label{eq:denfg:length:4:cycle:spa:1} \\
    \mu_{f_i \to f_{i-1}}^{(t)}(\tilde x_i)
      &\propto
         \sum_{\tilde x_{i+1}}
           B(\tilde x_i, \tilde x_{i+1})
           \cdot
           \mu_{f_{i+1} \to f_i}^{(t-1)}(\tilde x_{i+1}) \ .
             \label{eq:denfg:length:4:cycle:spa:2}
  \end{align}
  For $i \in [n]$, we assume the following initializations $\mu_{f_i \to
    f_{i+1}}^{(0)}(\tilde x_{i+1}) \defeq \delta(x_{i+1},x'_{i+1})$ and
  $\mu_{f_{i+1} \to f_i}^{(0)}(\tilde x_{i+1}) \defeq
  \delta(x_{i+1},x'_{i+1})$, where $\delta$ is the Kronecker-delta function.

  Because of the properties of the matrix $\matr{B}$ that are induced by the
  properties of the matrix $\matr{F}$, the SPA message update rules
  in~\eqref{eq:denfg:length:4:cycle:spa:1}--\eqref{eq:denfg:length:4:cycle:spa:2}
  represent so-called completely positive maps (see, \eg,
  \cite{Nielsen:Chuang:00:1}). (For this statement we ignore the rescaling
  factors.) Using generalizations of Perron--Frobenius theory
  (see~\cite{Evans:HoeghKrohn:78:1, Schrader:01:1}), one can make the
  following statements:
  \begin{itemize}
      
  \item For every $i \in [n]$, the message $\mu_{f_i \to f_{i+1}}^{(t)}(\tilde
    x_{i+1})$ converges to a PD matrix as $t \to \infty$.

  \item For every $i \in [n]$, the message $\mu_{f_{i+1} \to f_i}^{(t)}(\tilde
    x_{i+1})$ converges to a PD matrix as $t \to \infty$.

  \item The eigenvalue of the matrix $\matr{B}$ with maximum absolute value is
    a real number and is unique. Let us call it $\lambda_0$.

  \item The Bethe approximation of the partition sum is
    \begin{align}
      \ZBethe 
        &= \lambda_0^n \ .
    \end{align}

  \end{itemize}
  Compare this result with the partition sum, which is
  \begin{align}
    Z 
      &= \sum_{j=0}^{|\setX|^2-1}
           \lambda^n_j
       = \lambda_0^n
         \cdot
         \left(
           1
           +
           \sum_{j=1}^{|\setX|^2-1}
             \left(
               \frac{\lambda_j}{\lambda_0}
             \right)^n
         \right) \ ,
  \end{align}
  where $\lambda_0, \ldots, \lambda_{|\setX|^2-1}$ are the eigenvalues of
  $\matr{B}$. We see that the smaller the ratios $\bigl(
  \frac{\lambda_j}{\lambda_0} \bigr)^n$, $j = 1, \ldots, |\setX|^2 \! - \! 1$,
  are, the better the Bethe approximation is.

  For $n = 4$ and $|\set{X}| = 2$, Fig.~\ref{fig:denfg:length:4:cycle:1}(c)
  shows the obtained $Z$ and $\ZBethe$ values for $10^6$ experiments based on
  randomly generating matrices $\matr{F} \defeq \matr{U} \cdot \matr{D} \cdot
  \matr{U}^\Herm$, which are based on randomly generating unitary matrices
  $\matr{U}$ and diagonal matrices $\matr{D}$, where the diagonal entries of
  $\matr{D}$ are sampled i.i.d.\ from a standard $\chi^2$ distribution with
  one degree of freedom. We see that very often the ratio $\ZBethe / Z$ is
  rather close to $1$.
\end{example}

\begin{example}
  \label{example:denfg:four:cycle:with:diagonal:1}

  Consider now the DE-NFG in Fig.~\ref{fig:denfg:length:4:cycle:1}(b).  For
  $|\set{X}| = 2$, Fig.~\ref{fig:denfg:length:4:cycle:1}(d) shows the obtained
  $Z$ and $\ZBethe$ values for $10^6$ experiments based on randomly generating
  local functions. In contrast to Example~\ref{example:denfg:four:cycle:1},
  where for every instantiation all local function were the same, here for
  every instantiation all local function are generated independently. We
  observe the ratio $\ZBethe / Z$ is reasonably close to $1$, but typically
  larger than $1$.
\end{example}

\begin{figure}
  \begin{center}
    \subfigure[]{
      \epsfig{file=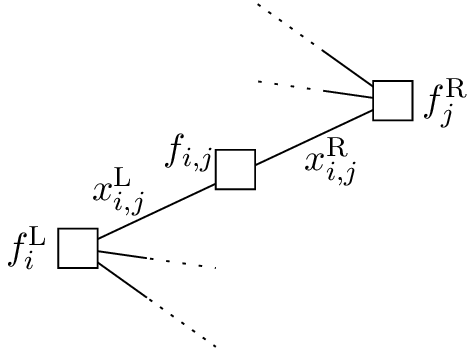,
              width=0.40\linewidth}
    }
    \quad
    \subfigure[]{
      \epsfig{file=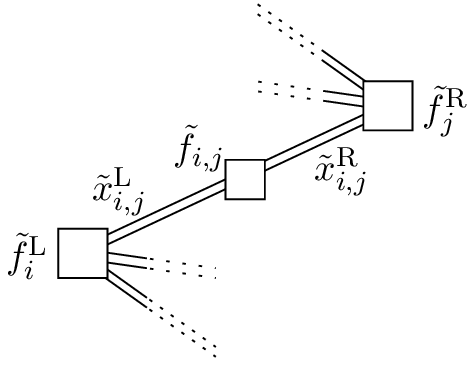,
              width=0.40\linewidth}
    }

    \subfigure[]{
      \parbox{0.38\linewidth}{%
      \includegraphics[width=\linewidth]
        {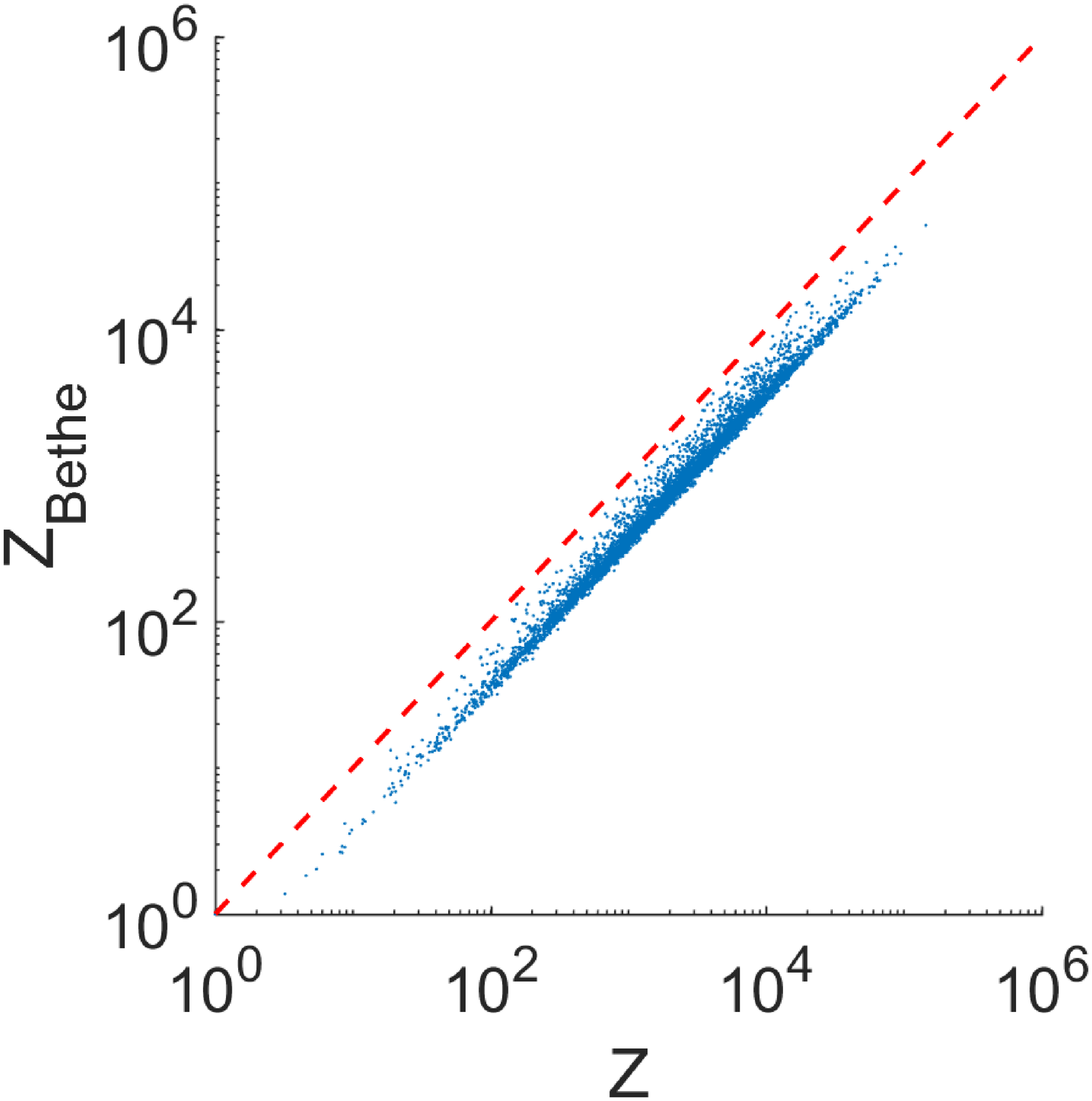} \\
      }
    }
  \end{center}
  \caption{DE-NFGs used in Example~\ref{example:denfg:permanent:1}.}
  \label{fig:denfg:permanent:1}
\end{figure}

\begin{example}
  \label{example:denfg:permanent:1}

  Let $\matr{\theta}$ be a complex-valued matrix of size $n \times n$ with
  entries $\theta_{i,j}$. The permanent~\cite{Minc:78} of $\matr{\theta}$ is
  defined to be $\perm(\matr{\theta}) \defeq \sum_{\sigma} \prod_{i=1}^{n}
  \theta_{i,\sigma(i)}$ \!,
  where the summation is over all $n!$ permutations of the set $[n] \defeq \{
  1, \ldots, n \}$. Ryser's algorithm, one of the most efficient algorithms
  for exactly computing $\perm(\matr{\theta})$ for general matrices $\theta$,
  requires $\Theta(n \cdot 2^n)$ arithmetic operations~\cite{Ryser:63:1}, and
  so the exact computation of permanent is intractable, even for moderate
  values of $n$. Note that even the computation of the permanent of matrices
  that contain only zeros and ones is \#P-complete~\cite{Valiant:79:1}.

  One can formulate an NFG whose partition sum equals $\perm(\matr{\theta})$,
  see, \eg, Fig.~1 in~\cite{Vontobel:13:1}. That NFG is a complete bipartite
  graph with $n$ function nodes on the left and $n$ function nodes on the
  right. Here, Fig.~\ref{fig:denfg:permanent:1}(a), shows a slightly modified
  version of that NFG. All variables take values in the set $\setX = \{ 0, 1
  \}$. Moreover, for all $i \in [n]$, the function $f^{\mathrm{L}}_{i}$ is
  defined to be
  \begin{align*}
    f^{\mathrm{L}}_{i}
      \bigl( 
        \{ x^{\mathrm{L}}_{i,j} \}_{j \in [n]}
      \bigr)
      &\defeq
         \begin{cases}
           1 & \text{exactly one of $\{ x^{\mathrm{L}}_{i,j} \}_{j \in [n]}$ 
                     equals $1$} \\
           0 & \text{otherwise}
         \end{cases} ;
  \end{align*}
  for all $j \in [n]$, the function $f^{\mathrm{R}}_{j}$ is defined
  analogously;
  and for all $(i,j) \in [n]^2$, the function $f_{i,j}$ is defined to be
  \begin{align*}
    f_{i,j}
      \bigl(
        x^{\mathrm{L}}_{i,j}, 
        x^{\mathrm{R}}_{i,j}
      \bigr)
      &\defeq
         \delta
           \bigl(
             x^{\mathrm{L}}_{i,j}, 
             x^{\mathrm{R}}_{i,j}
           \bigr)
         \cdot
         \begin{cases}
           \theta_{i,j} & \text{if $x^{\mathrm{L}}_{i,j} = 1$} \\
           1           & \text{if $x^{\mathrm{L}}_{i,j} = 0$}
         \end{cases} \ .
  \end{align*}

  In this example, we consider the following, rather natural generalization to
  the DE-NFG in Fig.~\ref{fig:denfg:permanent:1}(b), where we will use the
  short-hand $\tilde x^{\mathrm{L}}_{i,j}$ for $\bigl( x^{\mathrm{L}}_{i,j},
  \tilde x^{\mathrm{L'}}_{i,j} \bigr)$, \etc\ Assume that for $(i,j) \in
  [n]^2$, \ $\tilde \theta_{i,j}$ is a complex-valued PSD matrix of size $2
  \times 2$. With this, for $i \in [n]$, the function $\tilde
  f^{\mathrm{L}}_{i}$ is defined to be
  \begin{align*}
    \tilde f^{\mathrm{L}}_{i}
      \bigl( 
        \{ \tilde x^{\mathrm{L}}_{i,j} \}_{j \in [n]}
      \bigr)
      &\defeq
         f^{\mathrm{L}}_{i}
           \bigl( 
             \{ x^{\mathrm{L}}_{i,j} \}_{j \in [n]}
           \bigr)
         \cdot
         f^{\mathrm{L}}_{i}
           \bigl( 
             \{ x^{\mathrm{L}'}_{i,j} \}_{j \in [n]}
           \bigr) \ ;
  \end{align*}
  for all $j \in [n]$, the function $\tilde f^{\mathrm{R}}_{j}$ is defined
  analogously;
  and for all $(i,j) \in [n]^2$, the function $\tilde f_{i,j}$ is defined to be
  \begin{align*}
    \tilde f_{i,j}
      \bigl(
        \tilde x^{\mathrm{L}}_{i,j}, 
        \tilde x^{\mathrm{R}}_{i,j}
      \bigr)
      &\defeq
         \delta
           \bigl(
             x^{\mathrm{L}}_{i,j}, 
             x^{\mathrm{R}}_{i,j}
           \bigr)
         \cdot
         \delta
           \bigl(
             x^{\mathrm{L}'}_{i,j}, 
             x^{\mathrm{R}'}_{i,j}
           \bigr)
         \cdot
         \tilde \theta_{i,j}
           \bigl(
             x^{\mathrm{L}}_{i,j},
             x^{\mathrm{L}'}_{i,j}
           \bigr) \ .
  \end{align*}
  (One can easily verify that these local function define indeed a DE-NFG.)
  Finally, let $\tilde Z$ be the partition sum of this DE-NFG.

  This DE-NFG definition has the following two important special cases:
  \begin{itemize}

  \item If $\tilde \theta_{i,j} = \bigl( \begin{smallmatrix} 1 & 0 \\
      0 & \theta_{i,j} \end{smallmatrix} \bigr)$ for all $(i,j) \in [n]^2$,
    then $\tilde Z = \perm(\matr{\theta})$.

  \item If
    $\tilde \theta_{i,j} 
        = \bigl(
            \begin{smallmatrix} 
              1 \\
              \theta_{i,j} 
            \end{smallmatrix}
          \bigr)
          \cdot 
          \bigl(
            \begin{smallmatrix} 
              1 & \overline{\theta_{i,j}}
            \end{smallmatrix}
          \bigr)
        = \Bigl(
            \begin{smallmatrix} 
              1 & \overline{\theta_{i,j}} \\ 
              \theta_{i,j} & |\theta_{i,j}|^2 
            \end{smallmatrix}
            \Bigr)$ for all $(i,j) \in [n]^2$, then $\tilde Z =
            \perm(\matr{\theta}) \cdot \perm(\overline{\matr{\theta}}) =
            \bigl| \perm(\matr{\theta}) \bigr|^2$, where
            $\overline{\matr{\theta}}$ denotes the matrix whose entries are
            the complex-conjugate values of the entries of
            $\matr{\theta}$. (Note that such partition sums are of interest in
            quantum information processing~\cite{Aaronson:Arkhipov:13:1},
            where $\matr{\theta}$ are certain types of square matrices over
            the complex numbers. We refer to~\cite{Aaronson:Arkhipov:13:1} for
            details.)

  \end{itemize}

  In our experiments, we considered the following setup. Namely, for every
  $(i,j) \in [n]^2$, we independently generate $\tilde \theta_{i,j}$ as
  follows: $\tilde \theta_{i,j}(0,0) \defeq 1$; \ $\tilde \theta_{i,j}(1,0)$
  is picked uniformly from the unit circle in the complex plane; \ $\tilde
  \theta_{i,j}(0,1) \defeq \overline{\tilde \theta_{i,j}(1,0)}$; \ $\tilde
  \theta_{i,j}(1,1)$ is picked uniformly (and independently of the other
  entries) from the real line interval $[1.10, \
  11.10]$. Fig.~\ref{fig:denfg:permanent:1}(c) shows the obtained $Z$ and
  $\ZBethe$ values for $5000$ experiments for the case $n = 5$. We observe
  that the ratio $\ZBethe / Z$ is concentrated around a value smaller than
  $1$.
\end{example}

\section{Connections to a Paper by Mori}
\label{sec:connections:mori:1}

Finally, let us point out that there are strong connections of DE-NFGs to the
setup in Section~V of a recent paper by Mori~\cite{Mori:15:2}. Assume to have
a bipartite DE-NFG. (Such a DE-NFG can always be obtained by suitably
inserting dummy function nodes.) Then the partition sum can be written as some
inner product between, on the one hand, the tensor product of the local
functions corresponding to first class of function nodes of the bipartite
DE-NFG, and, on the other hand, the tensor product of the local functions
corresponding to the second class of function nodes of the bipartite
DE-NFG. Once this connection is observed, one can translate Mori's results
(like loop calculus expansions) to DE-NFGs.

\section{Conclusion}
\label{sec:conclusion:1}

In this paper we have defined DE-NFGs and studied some of their properties. In
particular, we have shown some promising numerical studies of the Bethe
approximation to the partition sum. Many open questions remain. For example,
can some of the results in~\cite{Vontobel:13:1} be generalized to the setup in
Example~\ref{example:denfg:permanent:1}? Or, as in the context of computing
the pattern maximum likelihood estimate, which can be formulated as optimizing
the parameters of some graphical model toward maximizing the partition
function, and where the Bethe partition sum was beneficially used as a
surrogate function~\cite{Vontobel:12:1}, can the Bethe partition sum of a
DE-NFG serve as a suitable surrogate function in some partition function
optimization problem?

\bibliographystyle{IEEEtran}
\bibliography{/home/vontobel/references/references,references_michael1}

\end{document}